\documentclass{article}
\usepackage{fullpage,latexsym,amsthm,amsfonts,amssymb,amsmath,amsthm}
\usepackage{tikz}
\usepackage{appendix}
\pdfoutput=1 

\def\01{\{0,1\}}


\newcommand{\sgn}{\mbox{\rm sgn}}

\newcommand{\connect}{\mbox{\sc Connected}}
\newcommand{\matching}{\mbox{\sc Matching}}
\newcommand{\trianglefind}{\mbox{\sc Triangle}}
\newcommand{\euler}{\mbox{\sc Euler}}
\newcommand{\planar}{\mbox{\sc Planar}}
\newcommand{\bipart}{\mbox{\sc Bipartite}}
\newcommand{\DISJ}{\mbox{\sc Disj}}
\newcommand{\IP}{\mbox{\sc IP}}
\newcommand{\OR}{\mbox{\sc OR}}
\newcommand{\DET}{\mathrm{DET}}

\newtheorem{theorem}{Theorem}
\newtheorem{myclaim}{Claim}


\begin{document}

\title{New bounds on the classical and quantum communication complexity of some graph properties\thanks{Most of this work was conducted at the Centre for Quantum Technologies (CQT) in Singapore, and partially funded by the Singapore Ministry of Education and the National Research Foundation. Research partially supported by the European Commission IST project Quantum Computer Science (QCS) 255961, by the CHIST-ERA project DIQIP,
by Vidi grant 639.072.803 from the Netherlands Organization for Scientific Research (NWO),
by the French ANR programs under contract ANR-08-EMER-012 (QRAC project) and ANR-09-JCJC-0067-01 (CRYQ project), by the French MAEE STIC-Asie program FQIC, and by the Hungarian Research Fund (OTKA).}}
\author{G\'abor Ivanyos\thanks{Computer and Automation Research Institute
of the Hungarian Academy of Sciences, Budapest, Hungary.
Email: {\tt Gabor.Ivanyos@sztaki.hu}}
\and
Hartmut Klauck\thanks{CQT and NTU Singapore. Email: {\tt hklauck@gmail.com}}
\and
Troy Lee\thanks{CQT Singapore. Email: {\tt troyjlee@gmail.com}}
\and
Miklos Santha\thanks{CNRS - LIAFA, Universit\'e Paris Diderot, Paris, France and Centre for Quantum Technologies, National University of Singapore. Email: {\tt santha@liafa.jussieu.fr}}
\and
Ronald de~Wolf\thanks{CWI and University of Amsterdam. Email: {\tt rdewolf@cwi.nl}}}
\date{}
\maketitle

\begin{abstract}
We study the communication complexity of a number of graph properties
where the edges of the graph $G$ are distributed between Alice and Bob (i.e.,
each receives some of the edges as input).
Our main results are:
\begin{itemize}
\item An $\Omega(n)$ lower bound on the quantum communication complexity of deciding whether 
an $n$-vertex graph $G$ is connected,
nearly matching the trivial classical upper bound of $O(n\log n)$ bits of communication.
\item A deterministic upper bound of $O(n^{3/2}\log n)$ bits for deciding if a bipartite
graph contains a perfect matching, and a quantum lower bound of $\Omega(n)$ for this problem.
\item A $\Theta(n^2)$ bound for the randomized communication complexity of deciding if a graph has an Eulerian tour, and a $\Theta(n^{3/2})$ bound for
the quantum communication complexity of this problem.
\end{itemize}
The first two quantum lower bounds are obtained by exhibiting a reduction from the $n$-bit Inner Product problem
to these graph problems, which solves an open question of Babai, Frankl and Simon~\cite{bfs:classes}. The third quantum lower bound comes
from recent results about the quantum communication complexity of composed functions.
We also obtain essentially tight bounds for the quantum communication
complexity of a few other problems, such as deciding if $G$ is triangle-free, or if $G$ is bipartite, as well as computing the determinant of a distributed matrix.
\end{abstract}

\section{Introduction}

Graphs are among the most basic discrete structures, and deciding whether graphs
have certain properties (being connected, containing a perfect matching,
being 3-colorable, \ldots) is among the most basic computational tasks.
The complexity of such tasks has been studied in a number of different settings.

Much research has gone into the \emph{query complexity} of graph properties, most of it
focusing on the so-called Aandera-Karp-Rosenberg conjecture.
Roughly speaking, this says that all monotone graph properties have
query complexity $\Omega(n^2)$. Here the vertex set is $[n]=\{1,\ldots,n\}$ and input graph $G=([n],E)$
is given as an adjacency matrix whose entries can be queried. 
This conjecture is proved for deterministic algorithms~\cite{rivest&vuillemin},
but open for randomized ones~\cite{hajnal:graphprop,chakrabarti&khot:graphprop}.

Less---but still substantial---effort has gone into the study of the \emph{communication complexity}
of graph properties~\cite{papadimitriou&sipser:cc,bfs:classes,hmt:ccgraphprop,pudlak&duris:planarity}.
Here the edges of $G$ are distributed over two parties, Alice and Bob.
Alice receives set of edges $E_A$, Bob receives set $E_B$ (these sets may overlap), and the goal
is to decide with minimal communication whether the graph $G=([n],E_A\cup E_B)$ has a certain property.

In this paper we obtain new bounds for the communication complexity of a number of graph properties,
both in the classical and the quantum world.
Our main results are:
\begin{itemize}
\item An $\Omega(n)$ lower bound on the quantum communication complexity of deciding whether $G$ is connected,
nearly matching the trivial classical upper bound of $O(n\log n)$ bits.
\item Hajnal et al.~\cite{hmt:ccgraphprop} state as an open problem to determine the communication complexity of deciding if a bipartite
graph contains a perfect matching (i.e., a set of $n/2$ vertex-disjoint edges).
We prove a deterministic upper bound of $O(n^{3/2}\log n)$ bits for this, and a quantum lower bound of $\Omega(n)$.
\item For the problem of deciding if a graph contains an Eulerian tour we show that the quantum communication complexity is $\Theta(n^{3/2})$, whereas the randomized communication complexity is $\Theta(n^2)$.
\end{itemize}
Our quantum lower bounds for the first two problems are proved by reductions from the hard inner product problem,
which is $\IP_n(x,y)=\sum_{i=1}^n x_i y_i$ mod 2.
Babai et al.~\cite[Section~7]{bfs:classes} showed how to reduce the disjointness problem 
($\DISJ_n(x,y)=1$ iff $\sum_{i=1}^n x_iy_i=0$) to these graph problems,
but left reductions from inner product as an open problem (they did reduce inner product to
a number of other problems~\cite[Section~9]{bfs:classes}).
In the classical world this doesn't make much difference since both \DISJ\ and \IP\ require $\Omega(n)$ communication (the tight lower bound for \DISJ\ was proved only after~\cite{bfs:classes} in~\cite{ks:disj}).  However, in the quantum world \DISJ\ is quadratically easier than \IP, so reductions from \IP\ give much stronger lower bounds in this case.

While investigating the communication complexity of graph properties is interesting in its own right, there have also been applications of lower bounds for such problems.
For instance, communication complexity arguments have recently been used to show new and tight lower bounds for several graph problems in distributed computing in~\cite{dassarma:distributed}. These problems include approximation and verification versions of classical graph problems like connectivity, $s$-$t$ connectivity, and bipartiteness. In their setting processors see only their local neighborhood in a network. \cite{dassarma:distributed} use reductions from \DISJ\ to establish their lower bounds. Subsequently some of these results have been generalized to the case of quantum distributed computing \cite{eknp:distributed}, employing for instance the new reductions from \IP\ given in this paper, which in the quantum case establish larger lower bounds than the previous reductions from \DISJ.

\section{Preliminaries}

We assume familiarity with communication complexity,
referring to~\cite{kushilevitz&nisan:cc} for more details about classical communication complexity
and~\cite{wolf:qccsurvey} for quantum communication complexity (for information about
the quantum model beyond what's provided in~\cite{wolf:qccsurvey}, see~\cite{nielsen&chuang:qc}).

Given some communication complexity problem $f:X\times Y\rightarrow R$
we use $D(f)$ to denote its classical deterministic communication complexity,
$R_2(f)$ for its private-coin randomized communication complexity with error
probability $\leq 1/3$, and $Q_2(f)$ for its private-coin quantum
communication complexity with error $\leq 1/3$.
Our upper bounds for the quantum model do not require prior shared entanglement;
however, all lower bounds on $Q_2(f)$ in this paper also apply to the
case of unlimited prior entanglement.

Among others we consider two well-known communication complexity problems,
with $X=Y=\01^n$ and $R=\01$.  For $x,y \in \01^n$ we define $x\wedge y \in \01^n$ as the bitwise
AND of $x$ and $y$, and $|x|=|\{i \in [n]: x_i = 1\}|$ as the Hamming weight of $x$.
\begin{itemize}
\item Inner product: $\IP_n(x,y)= |x\wedge y|$ mod 2.  The quantum communication
complexity of this problem is $Q_2(\IP_n)=\Theta(n)$~\cite{kremer:thesis,cdnt:ip}
(in fact even its \emph{unbounded-error} quantum communication complexity is linear~\cite{forster:probcc}).
\item Disjointness: $\DISJ_n(x,y)=1$ if $|x\wedge y|=0$, and $\DISJ_n(x,y)=0$ otherwise.
Viewing $x$ and $y$ as the characteristic vectors of subsets of $[n]$, the task is to decide whether these sets are disjoint.
It is known that $R_2(\DISJ_n)=\Theta(n)$~\cite{ks:disj,razborov:disj}
and $Q_2(\DISJ_n)=\Theta(\sqrt{n})$~\cite{BuhrmanCleveWigderson98,aaronson&ambainis:searchj,razborov:qdisj}.
In fact, the Aaronson-Ambainis protocol~\cite{aaronson&ambainis:searchj}
can find an $i$ such that $x_i=y_i=1$ (if such an $i$ exists),
using an expected number of $O(\sqrt{n})$ qubits of communication.
This saves a log-factor compared to the more straightforward distributed implementation
of Grover's algorithm in~\cite{BuhrmanCleveWigderson98}.
\end{itemize}

\section{Reduction from Parity}

We begin with a reduction from the $n$-bit Parity problem to
the connectedness of a $2n$-vertex graph in the model of \emph{query} complexity.  This reduction was used
by D\"urr et al.~\cite[Section~8]{dhhm:graphproblemsj}, who attribute it to Henzinger and
Fredman~\cite{henzinger&fredman:connectivity}.  The same reduction can be used to reduce Parity to
determining if an $n$-by-$n$ bipartite graph contains a perfect matching.
Our hardness results for communication complexity in later sections follow by
means of simple gadgets to transfer this reduction from the query world to the communication world.

\begin{myclaim}
For every $z \in \{0,1\}^n$ there is a graph $G_z$ with $2n$ vertices
(where for each possible edge, its presence or absence just depends on one of the bits of $z$), such that if the parity of $z$
is odd then $G_z$ is a cycle of length $2n$, and if the parity of $z$ is even then $G_z$ is the disjoint
union of two $n$-cycles.
\label{claim:queryred}
\end{myclaim}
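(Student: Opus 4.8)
The plan is to construct $G_z$ explicitly on the vertex set $\{0,1,\dots,2n-1\}$, thought of as $\mathbb{Z}/2n\mathbb{Z}$, so that the ``default'' graph (when all bits would force the cycle) is the standard $2n$-cycle $0-1-2-\cdots-(2n-1)-0$, and each bit $z_i$ acts as a local switch on one pair of edges incident to a fixed spot, toggling between ``go around the long way'' and ``take a shortcut''. Concretely, I would pair up vertex $2i$ with vertex $2i+1$ for $i=0,\dots,n-1$, and let the bit $z_i$ control whether the connection at position $i$ is ``straight through'' (edges $(2i-1,2i)$ and $(2i+1,2i+2)$, keeping two parallel strands separate) or ``crossed'' (edges $(2i-1,2i+1)$ and $(2i,2i+2)$, swapping which strand you are on). Each potential edge then depends on exactly one bit $z_i$, as required. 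I would draw the small $2\times n$ ``ladder'' picture (two horizontal strands with rungs) to make the gadget transparent, using the \texttt{tikz} package already loaded.

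The key steps, in order: (1) set up the ladder-style vertex labelling and, for each $i$, write down the two possible pairs of edges at column $i$ depending on $z_i$; check every edge is determined by a single bit. (2) Observe that, whatever $z$ is, every vertex has degree exactly $2$, so $G_z$ is a disjoint union of cycles covering all $2n$ vertices. (3) Track a single walk starting at vertex $0$: as it passes column $i$ it stays on its current strand iff $z_i=0$ and switches strands iff $z_i=1$. After going once around all $n$ columns, the walk has switched strands a total of $|z|$ times, so it returns to its starting strand iff $|z|$ is even. (4) Conclude: if $|z|$ is odd the walk only closes up after traversing both strands, giving a single cycle of length $2n$; if $|z|$ is even the walk closes up after one strand of length $n$, and by symmetry the other strand forms a second $n$-cycle, giving the disjoint union of two $n$-cycles.

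I expect the only real subtlety — it is not so much an obstacle as a point needing care — to be step (3)/(4): making the ``strand-switching'' bookkeeping precise, i.e.\ formalizing the parity argument that the number of crossings equals $|z|$ and that an odd number of crossings merges the two length-$n$ strands into one length-$2n$ cycle while an even number keeps them separate. This is cleanest phrased as: consider the permutation of the set $\{\text{strand }0,\text{strand }1\}$ obtained by composing the $n$ transpositions-or-identities dictated by $z_1,\dots,z_n$; this composite is the identity iff $|z|$ is even, and the cycle structure of $G_z$ is exactly the orbit structure of this permutation, scaled by the strand length $n$. Everything else (degree count, that each edge depends on one bit, that the non-crossed default gives two $n$-cycles) is immediate from the picture.
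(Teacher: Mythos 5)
Your construction is the same as the paper's: two parallel strands of $n$ vertices each, with bit $z_i$ toggling the $i$th connection between ``straight'' (two parallel edges) and ``crossed'' (two crossing edges), and the conclusion follows from counting strand-switches mod~2. The paper states the final cycle-structure claim without the explicit bookkeeping you spell out in steps (3)--(4), but the argument is identical in substance.
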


\begin{proof}
We construct a graph $G$ with $2n$ vertices, arranged in two rows of $n$ vertices each.  We will
label the vertices as $t_i$ and $b_i$ for $i \in [n]$ indicating if it is in the top row or the bottom row.
For $i\in[n-1]$, if $z_i=0$ then add edges $\{t_i,t_{i+1}\}$ and $\{b_i,b_{i+1}\}$;
if $z_i=1$ then add $\{t_i,b_{i+1}\}$ and $\{b_i,t_{i+1}\}$.  For $i=n$ make the same connections
with vertex $1$, wrapping around.  See Figure~\ref{figparitytoconnect} for illustration.
If the parity of $z$ is odd then the resulting graph $G$ will be one $2n$-cycle, and if the parity is even then it will be two $n$-cycles.
\end{proof}

\begin{figure}[hbt]
\centering
\includegraphics[scale=.6]{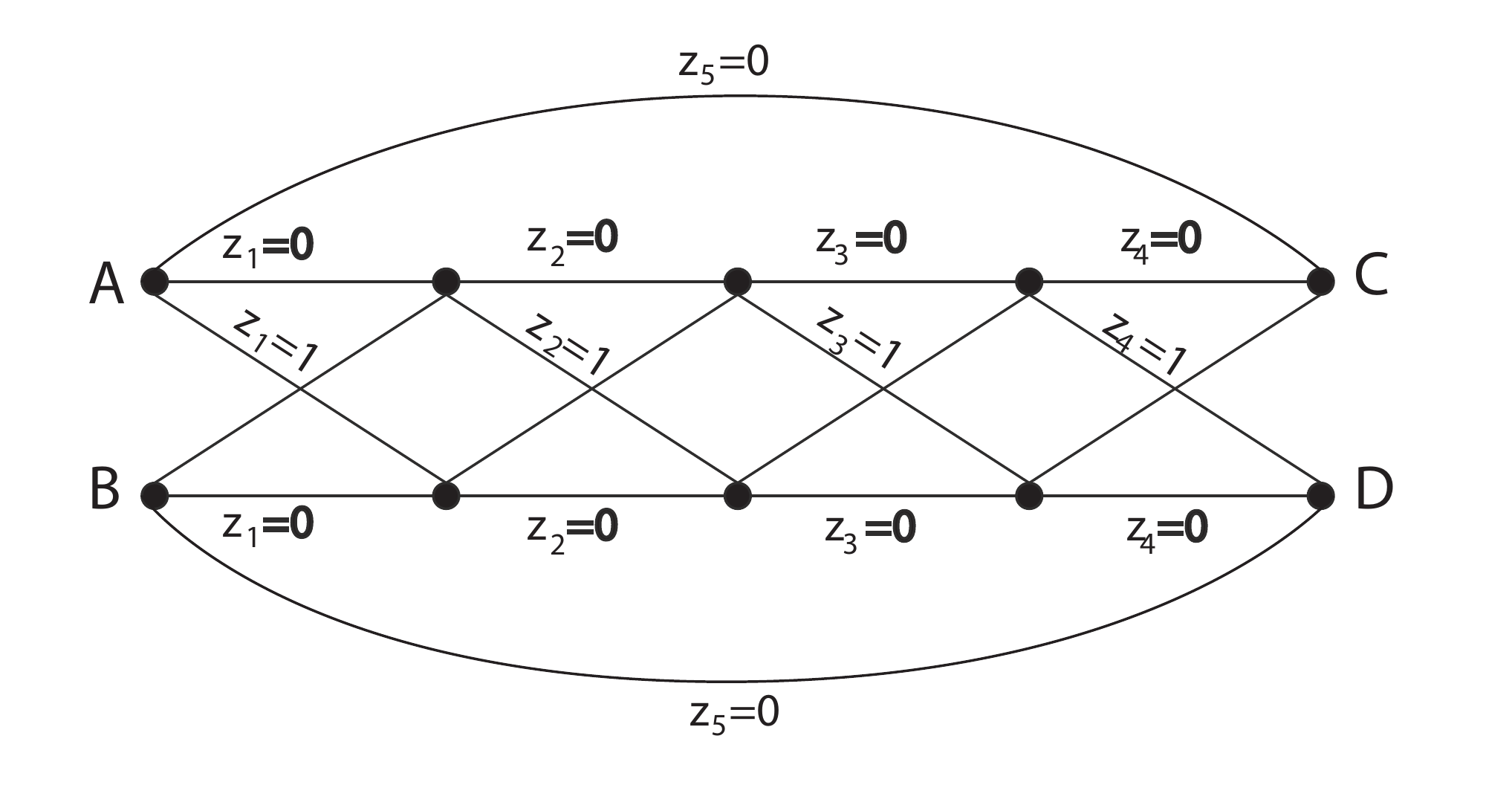}
\caption{The string $z$ determines the edges present in $G_z$.
If $z_5=1$ there are edges connecting A with D, and B with C (omitted for clarity).
When the parity of $z$ is odd, the graph is a $2n$-cycle, and when it is even the graph is the
disjoint union of two $n$-cycles.}\label{figparitytoconnect}
\end{figure}

\section{Connectedness}\label{secconnect}

We first focus on the communication complexity of deciding whether a graph $G$ is connected or not.
Denote the corresponding Boolean function for $n$-vertex graphs by $\connect_n$
(we sometimes omit the subscript when it's clear from context).
Note that it suffices for Alice and Bob to know the connected components of their graphs;
additional information about edges within their connected components is redundant for deciding connectedness.
Hence the ``real'' input length is $O(n\log n)$ bits, which of course implies the upper bound $D(f)=O(n\log n)$.
Hajnal et al.~\cite{hmt:ccgraphprop} showed a matching lower bound for $D(f)$.
As far as we know, extending this lower bound to $R_2(\connect)$ is open.
The best lower bound known is $R_2(\connect)=\Omega(n)$ via a reduction from $\DISJ_n$~\cite{bfs:classes}.
Since $\DISJ$ is quadratically easier for quantum communication than for classical communication,
the reduction from $\DISJ_n$ only implies a quantum lower bound $Q_2(\connect)=\Omega(\sqrt{n})$.
We now improve this by giving a reduction from $\IP_n$, answering an open question from~\cite{bfs:classes}.
Since we know $Q_2(\IP_n)=\Omega(n)$, this will imply $Q_2(\connect)=\Omega(n)$, which is tight up to the log-factor.

We modify the graph from Claim~\ref{claim:queryred}
originally used in the context of query complexity to give a reduction from inner product to
connectedness in the communication world.
\begin{theorem}
$\Omega(n)\leq  Q_2(\connect_n)\leq  D(\connect_n)\leq O(n\log n)$.
\end{theorem}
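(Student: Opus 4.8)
We first dispose of the two upper bounds, which are immediate and purely classical: Alice sends Bob the partition of $[n]$ into the connected components of $([n],E_A)$, which costs $O(n\log n)$ bits, and Bob can then decide connectedness of $([n],E_A\cup E_B)$ by himself; also $Q_2(f)\le D(f)$ for every $f$. So the real content is the lower bound $Q_2(\connect_n)=\Omega(n)$, and the plan is to prove it by a communication reduction from $\IP_n$ that itself uses no communication.

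Concretely, from inputs $x,y\in\01^n$ we want Alice (using only $x$) and Bob (using only $y$) to build a graph $G=([m],E_A\cup E_B)$ on $m=\Theta(n)$ vertices that is connected iff $\IP_n(x,y)=1$. Any protocol for $\connect_m$ then decides $\IP_n$ with the same cost, giving $Q_2(\connect_m)\ge Q_2(\IP_n)=\Omega(n)$; rescaling yields $Q_2(\connect_n)=\Omega(n)$, with the few residual values of $n$ handled by hanging extra vertices as pendants off a fixed vertex (which Alice does unconditionally and which does not change connectedness).

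To build $G$ we turn the query reduction of Claim~\ref{claim:queryred} into a communication reduction. Setting $z=x\wedge y$, the parity of $z$ equals $\IP_n(x,y)$, so $G_z$ is connected iff $\IP_n(x,y)=1$; the only obstacle is that the edges of $G_z$ at position $i$ depend on $z_i=x_i\wedge y_i$, which neither party knows alone. So we replace the position-$i$ ``crossing'' of $G_z$ (the parallel pair $\{t_i,t_{i+1}\},\{b_i,b_{i+1}\}$ when $z_i=0$, or the crossed pair $\{t_i,b_{i+1}\},\{b_i,t_{i+1}\}$ when $z_i=1$, indices mod $n$) by a gadget on two fresh vertices $m_i,m_i'$: Alice adds $\{t_i,t_{i+1}\},\{b_i,b_{i+1}\}$ if $x_i=0$ and $\{t_i,m_i\},\{b_i,m_i'\}$ if $x_i=1$; Bob, not knowing $x_i$, always adds $\{m_i,t_{i+1}\},\{m_i',b_{i+1}\}$ if $y_i=0$ and $\{m_i,b_{i+1}\},\{m_i',t_{i+1}\}$ if $y_i=1$. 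One checks that the two strands emerge crossed at column $i+1$ exactly when $x_i=y_i=1$, and that $m_i,m_i'$ always have degree $1$ or $2$: when $x_i=1$ they lie on the main structure, and when $x_i=0$ Bob still attaches them to $t_{i+1}$ or $b_{i+1}$, so they become harmless pendants rather than isolated vertices. Hence $G$ (on $m=4n$ vertices) is connected when $\IP_n(x,y)=1$ and has two or more components when $\IP_n(x,y)=0$.

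The crux is getting this gadget right. The naive choice --- Alice swaps the two strands iff $x_i=1$ and Bob swaps them iff $y_i=1$ --- fails, since wire-swaps compose as $\mathbb{Z}_2$ and this computes $x_i\oplus y_i$, not $x_i\wedge y_i$; the fix is the asymmetric ``length-two detour through $m_i$'' above, where a crossing survives only if $x_i=1$ (to divert the strands into $m_i,m_i'$) \emph{and} $y_i=1$ (for Bob to reconnect them to the opposite rows). The second thing to watch is that no gadget vertex may ever be isolated, since a single isolated vertex would make $G$ disconnected regardless of $\IP_n(x,y)$ and kill the reduction; this is exactly why Bob attaches $m_i,m_i'$ unconditionally. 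The remaining work --- checking degrees at $t_{i+1},b_{i+1}$, the wrap-around at $i=n$, and that the pendants cannot bridge the two components in the even-parity case --- is routine.
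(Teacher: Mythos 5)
Your proof is correct and follows essentially the same route as the paper: the trivial $O(n\log n)$ upper bound by sending connected components, and an $\Omega(n)$ lower bound by reducing $\IP_n$ to connectivity via the Parity graph of Claim~\ref{claim:queryred}, replacing each column's edges by a zero-communication gadget computing $x_i\wedge y_i$. Your gadget is a somewhat more economical variant ($2$ fresh vertices per column instead of the paper's $8$, and it also keeps $E_A$ and $E_B$ disjoint), but the construction and its verification are conceptually identical to the paper's.
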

\begin{proof}
Let $x\in\01^n$ and $y\in\01^n$ be Alice and Bob's inputs, respectively.
Set $z=x\wedge y$, 
then the parity of $z$ is $
\IP_n(x,y)$.
We define a graph $G$ which is a modification of the graph $G_z$ from Claim~\ref{claim:queryred} by distributing
its edges over
Alice and Bob,
in such a way that if $\IP_n(x,y)=1$ (i.e., $|z|$ is odd) then the resulting
graph is a $2n$-cycle, and if $\IP_n(x,y)=0$ (i.e., $|z|$ is even)
then $G$ consists of two disjoint $n$-cycles, and therefore is not connected.
To do that we replace every edge with a ``gadget'' that adds two extra vertices. Formally,
we will have the $2n$ vertices $t_i, b_i$, and $8n$ new vertices
$k_i^{tt}, k_i^{bb}, k_i^{tb}, k_i^{bt}, \ell_i^{tt}, \ell_i^{bb}, \ell_i^{tb}, \ell_i^{bt} ,$
for $i\in [n]$.
See Figure~\ref{fig:conGadget} for a picture of the gadgets.

We describe the gadget corresponding to the $i$th horizontal edge on the top. It involves the vertices
$t_i,  k_i^{tt}, \ell_i^{tt}, t_{i+1}$ and depends only on $x_i$ and $y_i$. The gadget corresponding to the $i$th
horizontal bottom edge is isomorphic but defined on vertices $b_i,  k_i^{bb}, \ell_i^{bb}, b_{i+1}$.
If $x_i = 0$ then $\{t_i,  k_i^{tt}\} \in E_A$, and  if $y_i=0$ then $\{t_i,  \ell_i^{tt}\} \in E_B$.
Independently of the value of $x_i$, the edges
$\{k_i^{tt}, t_{i+1}\}$ and $\{\ell_i^{tt}, t_{i+1}\}$ are in $E_A$.
Note that this gadget is connected iff $x_i y_i=0$.

Now we describe the gadget corresponding to the $i$th diagonal edge $\{t_i, b_{i+1}\}$, the gadget corresponding to
$\{b_i, t_{i+1}\}$ is isomorphic to this one on the appropriate vertex set. If $x_1 = 1$ then $\{t_i, \ell_i^{tb}\} \in E_A$, if
$y_i = 0$ then $\{k_i^{tb}, \ell_i^{tb}\} \in E_B$, and if $y_i =1$ then $\{\ell_i^{tb}, t_{i+1}\} \in E_B$. Finally
$\{t_i, k_i^{tb}\} \in E_A$ no matter what $x_i$ is.
Note that this gadget is connected iff $x_i y_i=1$.

In total the resulting graph $G$ will have $10n$ vertices, and disjoint sets $E_A$ and $E_B$ of $O(n)$ edges.
If $\IP_n(x,y)=1$ then the graph consists of one cycle of length $4n$, with a few extra vertices attached to it.
If $\IP_n(x,y)=0$ then the graph consists of two disjoint cycles of length $2n$ each, again
with a few extra vertices attached to them.
(Observe that $\ell_i^{tb}$ is always connected to $t_i$ or to $t_{i+1}$ even when $x_i=y_i=0$).
Accordingly, a protocol that can compute $\connect$ on this graph computes $\IP_n(x,y)$,
which shows $Q_2(\IP_n)\leq Q_2(\connect_{10n})$.

Our gadgets are slightly more complicated than strictly necessary, to ensure the sets of edges $E_A$ and $E_B$ are disjoint.
This implies that the lower bound holds even for that special case.
Note that the lower bound even holds for \emph{sparse} graphs, as $G$ has $O(n)$ edges.
\end{proof}

\begin{figure}[hbt]
\centering
\includegraphics[scale=.6]{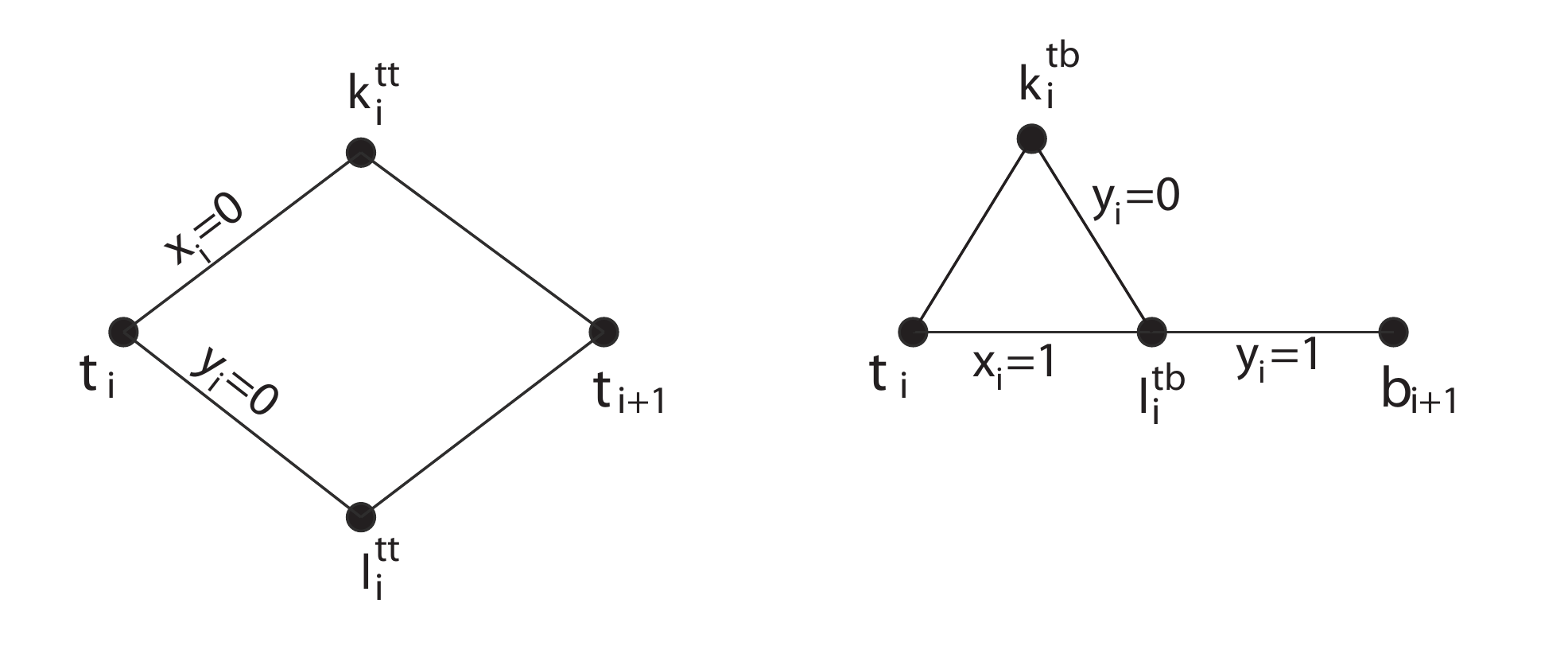}
\caption{Two gadgets used to modify the reduction from Parity in the query complexity model to one for
Inner Product in the communication complexity model.  On the left the gadget used to replace the
top $z_i=0$ edge in the graph from Figure~\ref{figparitytoconnect}, and on the right the gadget for the
diagonal top to bottom $z_i=1$ edge.}\label{fig:conGadget}
\end{figure}

\section{Matching}

The second graph problem we consider is deciding whether an $n\times n$ \emph{bipartite} graph
$G$ contains a perfect matching.
We denote this problem by $\matching_n$.
First, we show that the above reduction from \IP\ can be modified to also work for \matching.

\begin{theorem}\label{theorem:matchlower}
$Q_2(\matching_n)\geq\Omega(n)$.
\end{theorem}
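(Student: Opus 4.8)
The plan is to adapt the gadget construction from the proof of the connectedness theorem so that the same reduction from $\IP_n$ works for the bipartite perfect matching problem. Recall from Claim~\ref{claim:queryred} that for $z = x \wedge y$ we have a graph $G_z$ on $2n$ vertices which is a single $2n$-cycle when $|z|$ is odd (i.e., $\IP_n(x,y)=1$) and a disjoint union of two $n$-cycles when $|z|$ is even. The key observation I would use is that a single even cycle on $2m$ vertices has exactly two perfect matchings, while an odd cycle has none. So the disjoint union of two $n$-cycles has a perfect matching iff $n$ is even (each $n$-cycle then contributes one), whereas a single $2n$-cycle always has a perfect matching. I would therefore first arrange that $n$ is odd (or more precisely that the cycle lengths one gets are odd in the ``even $|z|$'' case, so the graph has no perfect matching), so that $\IP_n(x,y)=1$ forces a perfect matching to exist and $\IP_n(x,y)=0$ forces none.

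The main technical obstacle is two-fold: first, $G_z$ is not bipartite in general (a $2n$-cycle is bipartite, but we must be careful with the odd-cycle case), and second, the edges must be split between Alice and Bob in a way that respects the bipartite structure and keeps $E_A, E_B$ disjoint, just as in the connectedness proof. To handle bipartiteness I would subdivide each edge of the $G_z$-construction once, turning every cycle of length $c$ into a cycle of length $2c$; all cycles become even, hence bipartite, and I would choose parameters so that a cycle has a perfect matching iff it ``came from'' the odd-parity case. Concretely: replace each potential edge by a small gadget (a path through a few auxiliary vertices, with the presence/absence of one internal edge controlled by $x_i$ and one by $y_i$, exactly in the spirit of Figure~\ref{fig:conGadget}), designed so that the whole structure is one even cycle with a unique-up-to-symmetry perfect matching when $\IP_n(x,y)=1$, and is a disjoint union of even cycles together with some pendant vertices that cannot be matched when $\IP_n(x,y)=0$. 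The pendant/extra vertices (the analogues of $\ell_i^{tb}$ always hanging off $t_i$) are actually useful here: a vertex of degree one that is not matched to its unique neighbor kills any perfect matching, so I can use such dangling vertices to guarantee ``no perfect matching'' in the $\IP=0$ case while they get absorbed into the long cycle in the $\IP=1$ case.

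Carrying this out, the steps in order are: (1) take the query-complexity reduction of Claim~\ref{claim:queryred}, noting parity of $z=x\wedge y$ equals $\IP_n(x,y)$; (2) design an edge-gadget on a constant number of new vertices per original edge, with two ``switch'' edges controlled by $x_i$ and $y_i$ respectively, all assigned to $E_A$ or $E_B$ so that $E_A \cap E_B = \emptyset$ and the gadget's topology (a detour path vs.\ a branch) depends on $x_i y_i$ just as in Section~\ref{secconnect}; (3) verify the global picture: when $\IP_n(x,y)=1$ the graph is (up to pendant vertices that still get matched) a single even cycle and has a perfect matching; when $\IP_n(x,y)=0$ the graph splits into components at least one of which has an unmatchable vertex (or an odd number of vertices), so no perfect matching exists; (4) check bipartiteness of the resulting graph on $O(n)$ vertices and phrase it as the $n \times n$ bipartite matching problem $\matching_n$ (padding with isolated matched edges if the vertex count needs adjusting); (5) conclude $Q_2(\IP_n) \le Q_2(\matching_{O(n)})$, and since $Q_2(\IP_n) = \Omega(n)$ this yields $Q_2(\matching_n) = \Omega(n)$ by rescaling $n$. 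I expect step (3)—getting the matching-existence equivalence exactly right, including the behavior of all the auxiliary and pendant vertices in both cases—to be the delicate part, and I would verify it by a careful case analysis on a single gadget plus an induction around the cycle, analogous to but more involved than the connectivity argument, since matching is a more rigid global condition than connectivity.
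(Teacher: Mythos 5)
Your overall plan coincides with the paper's proof: both reduce $\IP_n$ to matching via the parity graph $G_z$ of Claim~\ref{claim:queryred} with $n$ odd, so that $\IP_n(x,y)=1$ yields one even cycle (which has a perfect matching) and $\IP_n(x,y)=0$ yields two odd cycles (which do not), and both replace each potential edge by a small gadget whose degree-one vertices force certain gadget-internal edges into any perfect matching, collapsing the question back to matching the underlying cycles. Your step (3) argument---a pendant vertex must be matched to its unique neighbor, so a perfect matching of the whole graph would induce a perfect matching of two odd cycles, a contradiction---is exactly the argument the paper gives.

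The one concrete step that would fail as written is your bipartiteness fix of subdividing every edge once. Subdivision doubles the length of every cycle, so the two odd cycles arising when $\IP_n(x,y)=0$ become two even cycles, which \emph{do} have perfect matchings; the YES/NO distinction is destroyed, and no choice of ``parameters'' can restore it as long as the subdivision is uniform. What the paper does instead is replace each edge of $G_z$ by a path of \emph{odd} length (length $1$ for the horizontal connection when $x_iy_i=0$, length $3$ through the vertices $k_i,\ell_i$ for the diagonal connection when $x_iy_i=1$), which preserves the parity of every cycle; the leftover gadget vertices are then absorbed by the forced edges $\{k_i^t,\ell_i^b\}$ and $\{k_i^b,\ell_i^t\}$. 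Be aware that the paper's construction consequently still contains odd cycles in the NO case, so it is not literally a bipartite instance either; the paper does not comment on this, and making the reduction target the bipartite-input formulation of $\matching_n$ genuinely requires more care than edge subdivision provides.
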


\begin{proof}
Let $x\in\01^n$ and $y\in\01^n$ be respectively the inputs of Alice and Bob.
As previously, we set $z=x\wedge y$, 
and observe again that the parity of $z$ is $
\IP_n(x,y)$.
We go back to the query world and the $2n$-vertex graph $G_z$ of Claim~\ref{claim:queryred}.
Assume $n$ is odd.
Then in case the parity of $z$ is odd, $G_z$ is a cycle of even length
$2n$ and so has a perfect matching.  On the other hand, in case the parity of $z$ is even, $G_z$
consists of two odd cycles and so has no perfect matching.

Now we again use gadgets to transfer this idea to a reduction from inner product to matching in
the communication complexity setting.  For simplicity we first describe the reduction where the
edge sets of Alice and Bob can overlap.  We then explain a modification to make them
disjoint.

The vertices of the graph $G$ will consist of the $2n$ vertices $t_i, b_i$ as in Figure~
\ref{figparitytoconnect} with the addition of $4n$ new vertices $k_i^{t}, k_i^{b}, \ell_i^{t}, \ell_i^{b}$
for $i\in [n]$.  For every $i$ there is a unique gadget on vertex set $\{t_i, b_i, k_i^{t}, k_i^{b}, \ell_i^{t}, \ell_i^{b},
t_{i+1}, b_{i+1}\}$.
The edges  $\{k_i^{t}, \ell_i^{b}\}$ and $\{k_i^{b}, \ell_i^{t}\}$ are always present in the graph,
and will be given to Alice.   If $x_i=0$ then we give Alice the edges $\{t_i, t_{i+1}\}$ and
$\{b_i, b_{i+1}\}$.
If $y_i=0$ we
do the same thing for Bob (this is where edges may overlap).
If $x_i=1$ we give Alice the edges $\{t_i, k_i^{t}\}$ and $\{b_i, k_i^{b}\}$.
If $y_i=1$ we give Bob the edges $\{t_{i+1}, \ell_i^{t}\}$ and $\{b_{i+1}, \ell_i^{b}\}$.
This is illustrated in Figure~\ref{figmatchGadget}.

Now in case the parity of $z$ is odd, we will have a cycle of even length, with possibly some
additional disjoint edges and attached paths of length two.  Thus there will be a perfect matching.
In case the parity of $z$ is even, we will have two odd cycles, and again  some additional disjoint
edges or attached paths of length two.  Suppose, by way of contradiction, that there is a perfect 
matching in this case.  In case $x_i y_i=0$, this matching must include the edge $\{k_i^{t}, \ell_i^{b}\}$, 
since at least one of these vertices has degree one, and similarly for $\{k_i^{b}, \ell_i^{t}\}$.  
Thus a perfect matching in this case gives a perfect matching of two odd cycles, a contradiction.

To make the edge sets disjoint, we replace horizontal edges between vertex $i$ and $i+1$ by
the gadget in the left of Figure~\ref{figmatchGadget}.
It can be seen that this does not change the properties used in the reduction.
\begin{figure}[hbt]
\centering
\includegraphics[scale=.6]{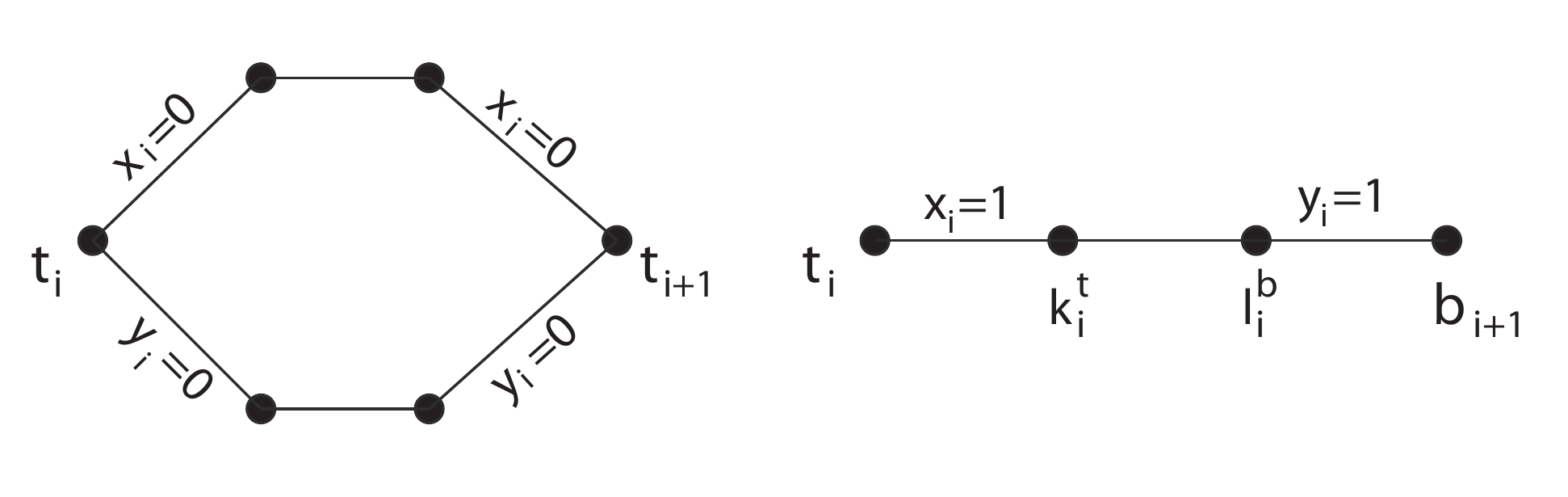}
\caption{Two gadgets used to modify the reduction from Parity in the query complexity model
to one for matching in the communication complexity model.  On the right is the gadget
for the top to bottom diagonal $z_i=1$ edge.  On the left, the gadget used to replace the
top $z_i=0$ horizontal edge in the graph from Figure~\ref{figparitytoconnect} such that
Alice and Bob receive disjoint sets of edges.}\label{figmatchGadget}
\end{figure}
\end{proof}

Second, we show a non-trivial deterministic upper bound $D(\matching_n)=O(n^{3/2}\log n)$
by implementing a distributed version of the famous Hopcroft-Karp
algorithm for finding a maximum-cardinality matching~\cite{hopcroft&karp:matching}.
Let us first explain this algorithm in the standard non-distributed setting.
The algorithm starts with an empty matching $M$, and in each iteration grows the size of $M$ until it can no
longer be increased. It does this by finding, in each iteration, many \emph{augmenting paths}.
An augmenting path, relative to a matching $M$, is a path $P$ of odd length that starts and ends at ``free'' (= unmatched in $M$) vertices,
and alternates non-matching with matching edges.  Note that the symmetric difference of $M$ and $P$ is another matching,
of size one greater than $M$.
Each iteration of the Hopcroft-Karp algorithm does the following
(using the notation of~\cite{hopcroft&karp:matching},
 we call the vertex sets of the bipartition $X$ and $Y$, respectively).
\begin{enumerate}
\item A breadth-first search (BFS) partitions the vertices of the graph into layers. The free vertices in
$X$ are used as the starting vertices of this search, and form the initial layer of the partition. The traversed edges are required to alternate between unmatched and matched. That is, when searching for successors from a vertex in
$X$, only unmatched edges may be traversed, while from a vertex in $Y$ only matched edges may be traversed. The search terminates at the first layer $k$ where one or more free vertices in
$Y$ are reached.
\item All free vertices in $Y$ at layer $k$ are collected into a set $F$. That is, a vertex $v$ is put into $F$ iff it ends a shortest augmenting path
(i.e., one of length~$k$).
The algorithm finds a maximal set of vertex-disjoint augmenting paths of length $k$. This set may be computed by depth-first search (DFS) from $F$ to the free vertices in
$X$, using the BFS-layering to guide the search: the DFS is only allowed to follow edges that lead to an unused vertex in the previous layer, and paths in the DFS tree must alternate between unmatched and matched edges.
Once an augmenting path is found that involves one of the vertices in $F$, the DFS is continued from the next starting vertex.
After the search is finished, each of the augmenting paths found is used to enlarge~$M$.
\end{enumerate}
The algorithm stops when a new iteration fails to find another augmenting path, at which point the current $M$ is a maximal-cardinality matching.
Hopcroft and Karp showed that this algorithm finds a maximum-cardinality matching using $O(\sqrt{n})$ iterations.
Since each iteration takes time $O(n^2)$ to implement, the overall time complexity is $O(n^{5/2})$.

Now consider what happens in a distributed setting, where Alice and Bob each have some of the edges of $G$.
In this case, one iteration of the Hopcroft-Karp algorithm can be implemented
by having each party perform as much of the search as possible within their
graph, and then communicate the relevant vertices and edges to the other.
To be more specific, the BFS is implemented as follows. For each level,
first Alice scans the vertices on the given level and lists the set of
vertices which belong to the next level due to edges seen by Alice,
and then Bob lists the remaining vertices of the next level. When doing
a DFS, first Alice goes forward as much as possible, then Bob follows.
If Bob cannot continue going forward he gives the control back to Alice
who will step back. Otherwise Bob goes forward as much as he can and
then gives the control back to Alice who can either step back or continue
going forward. During both types of search, when a new
vertex is discovered Alice or Bob communicates the vertex as well as
the edge leading to the new vertex. (Note that both the BFS and the DFS
give algorithms of communication cost $\Theta(n\log n)$
for the constructive version of connectivity.)

Since each vertex needs to be communicated at most once per iteration,
implementing one iteration thus takes $O(n\log n)$ bits of communication.
Since there are $O(\sqrt{n})$ iterations, the whole procedure can be implemented using $O(n^{3/2}\log n)$ bits of communication.
Finding the maximum-cardinality matching of course suffices for deciding if $G$ contains a perfect matching,
so we get the same upper bound on $D(\matching_n)$ (strangely, we don't know anything better when we allow randomization and quantum communication).
We have proved:

\begin{theorem}
$D(\matching_n) \leq O(n^{3/2}\log n)$.
\end{theorem}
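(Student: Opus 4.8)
The plan is to give a communication-efficient distributed simulation of the Hopcroft--Karp maximum-matching algorithm~\cite{hopcroft&karp:matching}, exploiting the fact that this algorithm terminates after only $O(\sqrt{n})$ phases. First I would recall the structure of a single phase: given the current matching $M$ (which, crucially, will always be common knowledge, since $M$ is assembled from augmenting paths that were communicated), one runs a layered BFS from the free $X$-vertices, alternating unmatched and matched edges, stopping at the first layer $k$ containing a free $Y$-vertex; one then runs a DFS guided by this layering to extract a \emph{maximal} set of vertex-disjoint augmenting paths of length $k$, after which $M$ is enlarged by all of them at once. The classical analysis shows that the length of a shortest augmenting path strictly increases from phase to phase, and once it exceeds $\sqrt{n}$ the matching is within $\sqrt{n}$ of maximum; this gives the $O(\sqrt{n})$ bound on the number of phases.

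Second, I would show that one phase costs only $O(n\log n)$ bits. The BFS is done layer by layer: with the current layer common knowledge, Alice first announces (each with a witnessing edge) the new vertices that enter the next layer because of an edge she holds, and then Bob announces the remaining new vertices of that layer; since every vertex enters at most one BFS layer, the total is $O(n)$ vertex/edge names, i.e.\ $O(n\log n)$ bits. The DFS is done by passing control between the players: the active player extends the current partial augmenting path as far as possible using only her own edges, respecting the layering and the unmatched/matched alternation, announcing each newly reached vertex together with the traversing edge; when she is stuck she hands control to the other player, and when both are stuck at the current tip the search backtracks one step. Exactly as in the usual Hopcroft--Karp DFS, a vertex is ``finalized'' at most once per phase --- either it is consumed by one of the chosen augmenting paths, or it is permanently marked as leading nowhere --- so it is announced only $O(1)$ times, and the DFS of a phase likewise costs $O(n\log n)$ bits.

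Multiplying $O(\sqrt{n})$ phases by $O(n\log n)$ bits per phase yields the claimed $O(n^{3/2}\log n)$ bound; since computing a maximum-cardinality matching in particular decides whether $G$ contains a perfect matching, this bounds $D(\matching_n)$.

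The step I expect to be the main obstacle is the accounting for the distributed DFS: one has to set up the control-passing and backtracking protocol so that the two players maintain a consistent view of which vertices are still ``live'' and of the current partial path, and then argue that the number of vertex announcements in a phase is $O(n)$ rather than, say, (number of paths) times (path length) or (number of backtracking steps). The point that makes it go through is that the BFS layering turns the DFS into a search on a layered DAG in which each vertex is processed to completion exactly once, so the total work --- and hence the communication --- in a phase is linear in the number of vertices.
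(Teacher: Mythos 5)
Your proposal is correct and follows essentially the same route as the paper: a distributed simulation of Hopcroft--Karp in which the BFS is done layer by layer (Alice announces new vertices from her edges, then Bob the rest) and the DFS by passing control back and forth, with each vertex communicated $O(1)$ times per phase, giving $O(n\log n)$ bits per phase times $O(\sqrt{n})$ phases. No substantive differences.
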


\subsection{Reducing communication by distributing Lov\'asz's~algorithm?}
In the usual setting of computation (not communication), Lov\'asz~\cite{lovasz:perfectmatch} gave a very elegant randomized method
to decide whether a bipartite graph contains a perfect matching in matrix-multiplication time.
Briefly, it works as follows.  The determinant of an $n\times n$ matrix $A$~is
$$
\det(A)=\sum_{\sigma\in S_n}\sgn(\sigma)\prod_{i=1}^n A_{i,\sigma(i)}.
$$
This $\det(A)$ is a degree-$n$ polynomial in the matrix entries.
Suppose we fix the $A_{ij}$ that equal~0, and replace the other $A_{ij}$ by variables $x_{ij}$.
This turns $\det(A)$ into a polynomial $p(x)$ of degree $n$ in (at most) $n^2$ variables $x_{ij}$.
Note that the monomial $\prod_{i=1}^n x_{i,\sigma(i)}$ vanishes iff at least one of the $A_{i,\sigma(i)}$ equals~0.
Hence a graph $G$ has no perfect matching iff the polynomial $p(x)$ derived from its bipartite adjacency matrix $A$ is identically equal to~0.
Testing whether a polynomial $p$ is identically equal to~0 is easy to do with a randomized algorithm:
randomly choose values for the variables $x_{ij}$ from a sufficiently large field, and compute the value of the polynomial $p(r)$.
If $p\equiv 0$ then $p(r)=0$, and if $p\not\equiv 0$ then $p(r)\neq 0$ with high probability by
the Schwartz-Zippel lemma~\cite{schwartz:probabilistic,zippel:probabilistic}.
Since $p(x)$ is the determinant of an $n\times n$ matrix, which can be computed in matrix-multiplication time $O(n^\omega)$,%
\footnote{The current best bound is $\omega\in[2,2.373)$~\cite{coppersmith&winograd:matrixmult,stothers:phd,vassilevska:matrixmult}.}
we obtain the same upper bound on the time needed to decide with high probability whether a graph contains a perfect matching.

One might hope that a distributed implementation of Lov\'asz's algorithm could improve the above 
communication protocol for matching, using randomization and possibly even quantum communication.
Unfortunately this does not work, because it turns out that computing the determinant of an $n\times n$
matrix whose $n^2$ entries are distributed over Alice and Bob, takes $\Omega(n^2)$ qubits of communication.
In fact, even deciding whether the determinant equals~0 modulo~2 takes $\Omega(n^2)$ qubits of communication.
We show this by reduction from $\IP_{n^2}$.  Let $\DET_n$ be the communication problem where
Alice is given an $n$-by-$n$ Boolean matrix $X$, Bob an $n$-by-$n$ Boolean matrix $Y$, and the desired output is
$\det(X \wedge Y)$, where $X \wedge Y$ is the bitwise AND of $X$ and $Y$.

\begin{theorem}
$\Omega(n^2) \le Q_2(\DET_n)$.
\end{theorem}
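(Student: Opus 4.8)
The plan is to prove $Q_2(\DET_n)=\Omega(n^2)$ by reducing $\IP_N$ to it for $N=\Theta(n^2)$, exploiting the fact that a block matrix of the shape $\begin{pmatrix} I & A\\ J & I\end{pmatrix}$, with $J$ the all-ones matrix, has a determinant that collapses to a \emph{linear} function of the entries of $A$. The reason to hope for this: a generic $n\times n$ determinant is a high-degree polynomial in its entries and cannot encode an $n^2$-bit inner product, so one needs a structured instance whose determinant degenerates to something of degree two.

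Concretely, I set $N=\lfloor n/2\rfloor\cdot\lceil n/2\rceil=\Theta(n^2)$ and identify Alice's input $x\in\01^N$ with an $\lfloor n/2\rfloor\times\lceil n/2\rceil$ Boolean matrix $\widehat x$, and likewise Bob's input $y$ with $\widehat y$. Alice forms the $n\times n$ matrix $X=\begin{pmatrix} I & \widehat x\\ J & I\end{pmatrix}$, whose off-diagonal blocks are her matrix $\widehat x$ and the all-ones matrix $J$ and whose diagonal blocks are identity matrices of the appropriate sizes; Bob forms $Y=\begin{pmatrix} I & \widehat y\\ J & I\end{pmatrix}$ analogously. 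Since $I\wedge I=I$ and $J\wedge J=J$, we get $X\wedge Y=\begin{pmatrix} I & A\\ J & I\end{pmatrix}$ with $A_{ij}=x_{ij}\wedge y_{ij}$; in particular $(X,Y)$ is a legitimate instance of $\DET_n$.

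The heart of the argument is the evaluation of $\det(X\wedge Y)$. Taking the Schur complement with respect to the invertible top-left identity block gives $\det(X\wedge Y)=\det(I-JA)$. Now $JA$ has every row equal to the vector $w$ of column sums of $A$, so $JA=\mathbf{1}w^{\top}$ has rank at most one; by the matrix determinant lemma, $\det(I-\mathbf{1}w^{\top})=1-w^{\top}\mathbf{1}=1-\sum_{i,j}A_{ij}=1-|x\wedge y|$. Hence $\det(X\wedge Y)\equiv 1+\IP_N(x,y)\pmod 2$, so Alice and Bob can recover $\IP_N(x,y)$ from (even just the parity of) the output of any $\DET_n$ protocol. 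Since $Q_2(\IP_N)=\Omega(N)=\Omega(n^2)$, this gives $Q_2(\DET_n)=\Omega(n^2)$, and the same argument applies to the easier problem of computing $\det(X\wedge Y)\bmod 2$.

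The one thing to get right is the choice of block structure: putting the all-ones matrix $J$ (rather than an identity) in the lower-left block is exactly what makes the product $JA$ rank-one, so that the determinant degenerates to $1-|x\wedge y|$ instead of a high-degree polynomial in the entries of $A$. The remaining steps---checking the block sizes for both parities of $n$ so that $N=\Theta(n^2)$, verifying that the instance is valid for $\DET_n$, and the routine reduction bookkeeping---are straightforward, so I do not expect a real obstacle. (If one wants to reduce from $\IP_{n^2}$ on the nose, one can instead use $2n\times 2n$ matrices with $n\times n$ blocks, at the cost of only a constant factor.)
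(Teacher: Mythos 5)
Your reduction is correct, and it takes a genuinely different route from the paper's. The paper also reduces from $\IP_{\Theta(n^2)}$, but via Valiant's graph-theoretic construction: a directed bipartite graph on $2n+2$ vertices with two special vertices $s,t$, self-loops everywhere else, and an edge $(\ell_i,r_j)$ per input bit, arranged so that the only permutations contributing to the determinant are $4$-cycles $s\to\ell_i\to r_j\to t\to s$, one per $1$-entry; hence $\det=-|Z|$. Your construction achieves the same degree collapse by pure linear algebra: with $X\wedge Y=\left(\begin{smallmatrix} I & A\\ J & I\end{smallmatrix}\right)$ the Schur complement $I-JA$ is a rank-one perturbation of the identity, so $\det=1-|x\wedge y|$, and the checks (block sizes, $I\wedge I=I$, $J\wedge J=J$, $N=\lfloor n/2\rfloor\lceil n/2\rceil=\Omega(n^2)$, $Q_2(\IP_N)=\Omega(N)$) all go through. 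Two small advantages of your version: it works with the bitwise-AND semantics of $\DET_n$ exactly as defined (the paper's proof actually switches to union-of-edge-sets semantics and counts the \emph{zeros} of $X\wedge Y$, since edge union naturally encodes an OR), and it needs no extra vertices or self-loop bookkeeping. The paper's version has the advantage of living entirely inside the graph/adjacency-matrix picture used throughout the rest of the paper and of connecting to the standard cycle-cover interpretation of the determinant. Both establish the same $\Omega(n^2)$ bound, including for deciding $\det\bmod 2$.
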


\begin{proof}
As before, we first explain a reduction in the query world from Parity of $n^2$ bits to computing the
determinant of a $(2n+2) \times (2n+2)$ matrix.  The basic idea of the proof goes back to Valiant~\cite{valiant:det}.
Say that we want to compute the parity of the bits of an $n^2$-bit string $Z$,
and arrange the bits of $Z$ into an $n$-by-$n$ matrix.  We construct a directed bipartite graph
$G_Z$ with $2n+2$ vertices, $n+1$ on each side (we will refer to these as left-hand side and
right-hand side).  Label the vertices on the left-hand side
as $t$ and $\ell_i$ for $i \in [n]$,
and those on the right-hand side as $s$ and $r_i$ for $i \in [n]$.  For every $i \in [n]$, we add the edges
$(s, \ell_i)$ and $(r_i, t)$.
For every $(i,j)$ with $Z(i,j)=1$ we
put an edge $(\ell_i, r_j)$.  Finally we put the edge $(t,s)$, and self-loops are added to all vertices but $s$ and $t$.
\begin{myclaim}
\label{detqueryclaim}
$\det(G_Z)=-|Z|$.
\end{myclaim}
\begin{proof}
Note that
\[
\det(G_Z)=\sum_\sigma (-1)^{\chi(\sigma)} \prod_i G_Z(i, \sigma(i)).
\]
Consider a permutation that contributes to this sum.  In this case, $\sigma(\ell_i)=r_j$ for some
$i,j$ for which $Z(i,j)=1$.  We then must have $\sigma(r_j)=t, \sigma(t)=s, \sigma(s)=\ell_i$ and
that $\sigma$ fixes all other vertices.  The sign of this permutation is negative, and we get such
a contribution for every $i,j$ such that $Z(i,j)=1$.
\end{proof}
Now again we transfer this reduction to the communication complexity setting by means of
a gadget.  Say that Alice has $X$, an $n$-by-$n$ matrix and similarly Bob has $Y$ and they
want to compute $|X \wedge Y| \bmod 2$.  We will actually count the number of zeros in
$X \wedge Y$, which clearly then allows us to know the number of ones and so the parity.

We give Alice the set of edges $E_A$ and Bob the set of edges $E_B$.  Unlike in the previous reductions, in this case $E_A$ and $E_B$ will not be disjoint (we do not know how to do the reduction with disjoint $E_A,E_B$).
Put $(s,\ell_i), (\ell_i, \ell_i) \in E_A$ for all $i \in [n]$ and
similarly $(t,r_i), (r_i,r_i) \in E_B$ for
all $i\in [n]$.  For all $(i,j)$ where $X(i,j)=0$ put $(\ell_i,r_j) \in E_A$, and similarly for all
$(i,j)$ where $Y(i,j)=0$ put $(\ell_i,r_j) \in E_B$.  Thus in $E_A \cup E_B$ there is an edge
$(\ell_i,r_j)$ if and only if $X(i,j) Y(i,j)=0$.  Thus by Claim~\ref{detqueryclaim} from the determinant
of the graph with edges $E_A \cup E_B$ we can determine the number of zeros in $X \wedge Y$.%
\end{proof}

\begin{figure}[hbt]
\centering
\includegraphics[scale=.6]{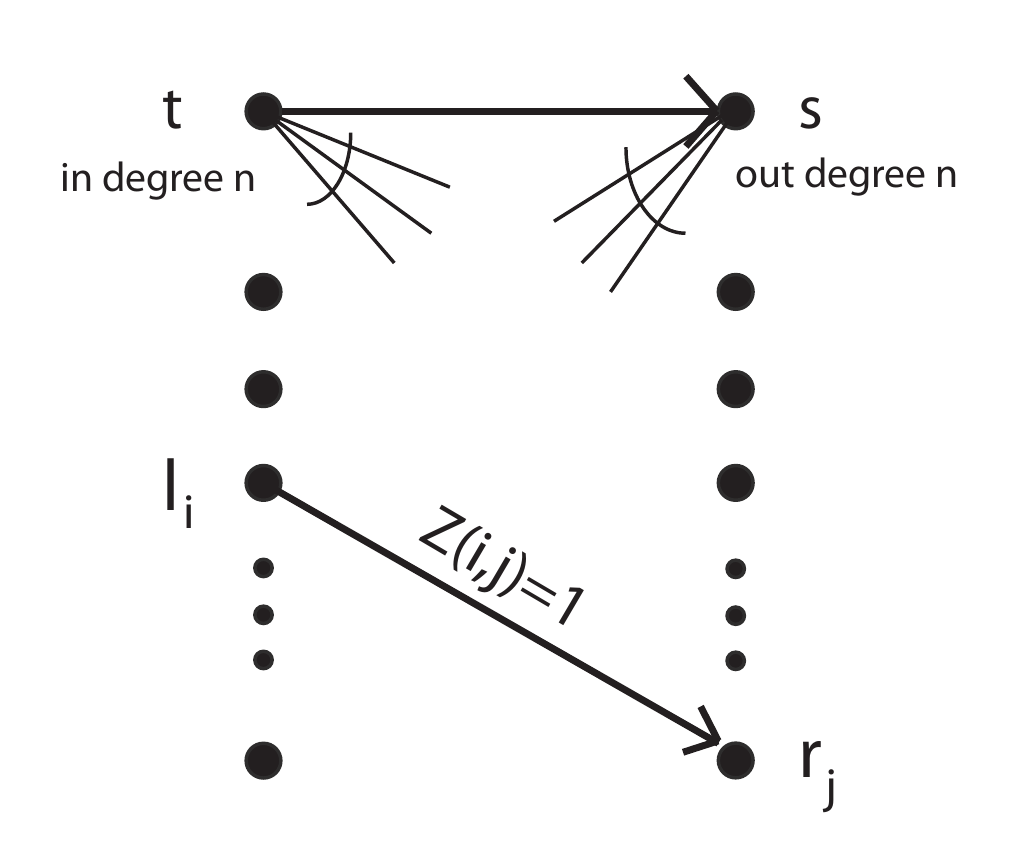}
\caption{The construction of the graph $G_Z$.  Self-loops omitted for clarity.}
\label{DetQuery}
\end{figure}

\section{Eulerian tour}

An \emph{Eulerian tour} in a graph $G$ is a cycle that goes through each edge of the graph exactly once.
A well-known theorem of Euler states that $G$ has such a tour iff it is connected and all its vertices have even degree.
Denote the corresponding communication complexity problem for $n$-vertex graphs by $\euler_n$.
Note that when the sets $E_A$ and $E_B$ are allowed to overlap, deciding if the degree $\deg(v)$ of a fixed vertex $v\in[n]$ is even
is essentially equivalent to $\IP_{n-1}$, as follows. Let $x\in\01^{n-1}$ be the characteristic vector of the neighbors of $v$ in $E_A$,
and $y\in\01^{n-1}$ the same for $E_B$, then we have $\deg(v)=|x\vee y|=|x|+|y|-|x\wedge y|$.
Since Alice and Bob can send each other the numbers $|x|$ and $|y|$ using a negligible $\log n$ bits,
computing $deg(v)\bmod 2$ is essentially equivalent to computing $|x\wedge y|\bmod 2=\IP_{n-1}(x,y)$.

Now we show how to embed into $\euler_{3n+4}$ an $\OR_n$ of disjoint $\IP_n$'s.
As usual, we first explain the reduction in the query world. For $i \in [n]$, let $z^i \in \01^n$,
and suppose that we want to compute $\OR_n(|z^1| \bmod 2, \ldots, |z^n| \bmod 2)$. We construct a graph $G$ with
$n+2$ left vertices $\ell_i$ and $n+2$ right vertices $r_i$ for $0 \leq i \leq n+1$, and $n$ middle vertices
$m_i$ for $i\in [n]$. Independently from the strings $z^i$, the graph
$G$ always has the edges $\{\ell_i, \ell_{i+1}\}$ and $\{r_i, r_{i+1}\}$ for $0 \leq i \leq n$
and the edges $\{m_i, m_{i+1}\}$ for  $1 \leq i \leq n-1$. It also contains the following 5 edges:
$\{\ell_0, m_1\},  \{r_0, m_1\}, \{\ell_{n+1}, m_n\}, \{r_{n+1}, m_n\}, \{m_1, m_n\}.$
We call these edges {\em fixed} edges. Finally, for every $(i,j)$ with
$z^i_j = 1$ we add the edges $\{\ell_i, m_j\}$ and $\{r_i, m_j\}$. Observe that $G$ is already connected by the fixed edges.  
See Figure~\ref{fig:euler} for an illustration.

\begin{myclaim}
\label{eulerclaim}
$G$ is Eulerian if and only if $\OR_n(|z^1| \bmod 2, \ldots, |z^n| \bmod 2)=0$.
\end{myclaim}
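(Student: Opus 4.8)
The plan is to verify the Eulerian condition — connectedness plus all-even degrees — for the graph $G$ built from the strings $z^1,\ldots,z^n$, and to track exactly which parities the degrees encode. Connectedness is free: the fixed edges already connect $\ell_0-\cdots-\ell_{n+1}$, $r_0-\cdots-r_{n+1}$, $m_1-\cdots-m_n$, and the edges $\{\ell_0,m_1\}$, $\{r_0,m_1\}$ tie the three paths together. So by Euler's theorem $G$ is Eulerian iff every vertex has even degree, and I would spend the rest of the argument computing degrees.

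The key computation is vertex by vertex. For a middle vertex $m_j$ with $1 < j < n$: it has the two path edges $\{m_{j-1},m_j\},\{m_j,m_{j+1}\}$, and for each $i\in[n]$ with $z^i_j=1$ it gets the two edges $\{\ell_i,m_j\},\{r_i,m_j\}$; everything comes in pairs, so $\deg(m_j)$ is always even, independent of the input. The endpoints $m_1$ and $m_n$ additionally carry the fixed edges $\{\ell_0,m_1\},\{r_0,m_1\},\{m_1,m_n\}$ (resp.\ $\{\ell_{n+1},m_n\},\{r_{n+1},m_n\},\{m_1,m_n\}$): that is $3$ extra edges plus the one path edge, i.e.\ an even number of extra edges, so $\deg(m_1),\deg(m_n)$ are also always even. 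For a left vertex $\ell_i$ with $i\in[n]$: it has the two path edges $\{\ell_{i-1},\ell_i\},\{\ell_i,\ell_{i+1}\}$, plus one edge $\{\ell_i,m_j\}$ for each $j$ with $z^i_j=1$, so $\deg(\ell_i)=2+|z^i|$, which is even iff $|z^i|$ is even. The same holds for $r_i$. Finally the four ``corner'' vertices $\ell_0,\ell_{n+1},r_0,r_{n+1}$: each has its single path edge plus exactly one fixed edge to a middle vertex, hence degree $2$, even, independent of the input.

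Putting this together: the only vertices whose degree-parity depends on the input are $\ell_i$ and $r_i$ for $i\in[n]$, and each of these is even iff $|z^i|$ is even. Hence all degrees are even iff $|z^i|$ is even for every $i\in[n]$, which is exactly the statement that $\OR_n(|z^1|\bmod 2,\ldots,|z^n|\bmod 2)=0$. Combined with the unconditional connectedness, Euler's theorem gives: $G$ is Eulerian iff $\OR_n(|z^1|\bmod 2,\ldots,|z^n|\bmod 2)=0$, which is the claim.

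I do not expect a real obstacle here — the construction was reverse-engineered so that the bookkeeping works out — but the one place to be careful is the boundary/corner vertices ($m_1,m_n$ and the four $\ell_0,\ell_{n+1},r_0,r_{n+1}$), where the count of fixed edges must be checked to land on an even number so as not to flip a parity; the edge $\{m_1,m_n\}$ in particular is there precisely to keep $m_1$ and $m_n$ even (and, in the subsequent communication reduction, to make the $\IP$-gadget work when all $z^i_j=0$, analogously to the parenthetical remark after the $\connect$ reduction). A secondary point worth a sentence is that $|z^i|$ can be as large as $n$, so ``$\deg(\ell_i)=2+|z^i|$ even $\iff |z^i|$ even'' should be stated as a parity statement rather than appealing to any small-degree special case.
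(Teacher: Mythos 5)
Your proof is correct and follows essentially the same route as the paper's: connectedness is automatic from the fixed edges, the fixed edges contribute an even amount to every degree, the variable edges contribute evenly to each $m_j$, and the only input-dependent parities are $\deg(\ell_i)\equiv\deg(r_i)\equiv|z^i|\pmod 2$. Your version is just a more explicit vertex-by-vertex bookkeeping of the same argument.
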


\begin{proof}
In the subgraph restricted to the fixed edges every vertex has even degree. Therefore
we can restrict our attention to the degrees with respect to the remaining edges that depend on the values $z^i_j$.
All the middle vertices have even degrees since for all $(i,j)$, we add 0 or 2 edges adjacent to $m_j$. For every $i \in [n],$
the degrees of $\ell_i$ and $r_i$ are the same since we add the edge $\{\ell_i, m_j\}$
exactly when we add the edge $\{r_i, m_j\}$. The degree of $\ell_i$ is the Hamming weight of $z^i$.
Therefore $G$ is Eulerian iff $|z^i|$ is even for all $i\in [n]$.
\end{proof}

\begin{figure}[hbt]
\centering
\includegraphics[scale=.6]{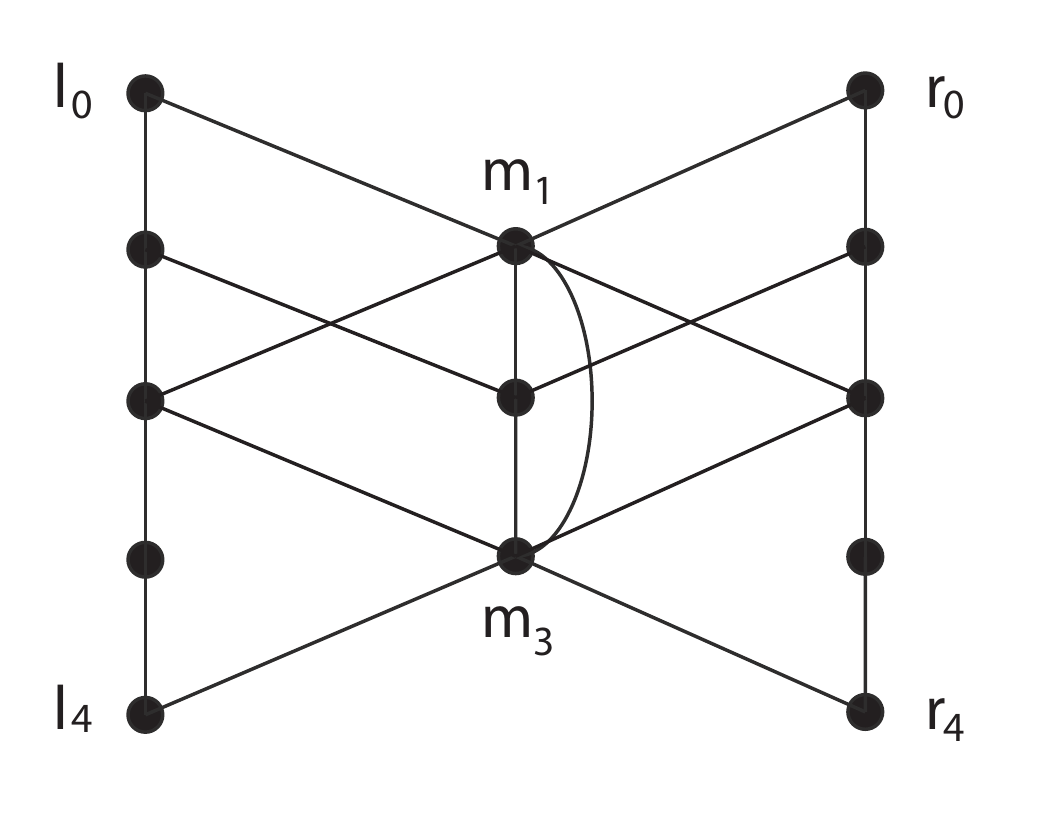}
\caption{Illustration of the graph to reduce OR of parities to Eulerian tour in the query model.
In this example, $n=3$ and $z^1=010, z^2=101,z^3=000$.}
\label{fig:euler}
\end{figure}

The transfer of this reduction to the communication complexity setting is
quite simple. Suppose that for each $i \in [n]$ Alice has string $x^i \in \01^n$, and Bob has 
$y^i \in \01^n$, and they want to compute the function $\OR_n(\IP_n(x^1,y^1), \ldots , \IP_n(x^n,y^n))$.
Let us suppose that $n$ is even, then $\IP_n(x^i,y^i) = \sum_j (\bar{x}^i_j \vee \bar{y}^i_j)$.
For all $(i,j)$ such that $x^i_j = 0$ we put the edges $\{\ell_i, m_j\}$ and $\{r_i, m_j\}$ in $E_A$,
and similarly, for $y^i_j = 0$ we put the edges $\{\ell_i, m_j\}$ and $\{r_i, m_j\}$ in $E_B$.
Thus in $E_A \cup E_B$ the edges $\{\ell_i, m_j\}$ and $\{r_i, m_j\}$ exist if and only if
$\bar{x}^i_j \vee \bar{y}^i_j = 1$. Therefore, by Claim~\ref{eulerclaim}
$\OR_n(\IP_n(x^1,y^1),\ldots,\IP_n(x^n,y^n)) = 0$ if and only if $G$ is Eulerian.

We can easily reduce $\DISJ_{n^2}$ on $n^2$-bit instances with intersection size~0 or~1 to $\OR_n\circ\IP_n$.
Since even that special case of $\DISJ_{n^2}$ requires linear classical communication~\cite{razborov:disj},
we obtain a tight lower bound $R_2(\euler_n)=\Omega(n^2)$.

The quantum communication complexity of
$\OR_n(\IP_n(x^1,y^1), \ldots , \IP_n(x^n,y^n))$ is $\Omega(n^{3/2})$.  This follows because
for any $f(g(x^1,y^1), \ldots, g(x^n,y^n))$ where $g$ is strongly balanced 
(meaning that all rows and columns in the communication matrix $M(x,y)=(-1)^{g(x,y)}$ sum to zero), 
the quantum communication complexity of $f$ is at least the approximate polynomial degree of $f$, 
times the discrepancy bound of $g$~\cite[Cor.~3]{lz:comp}.  
In our case, $\OR_n$ has approximate degree $\Omega(\sqrt{n})$ and $\IP_n$ contains 
a $2^{n-1}$-by-$2^{n-1}$ strongly balanced submatrix with discrepancy bound $\Omega(n)$.  
Thus we get $Q_2(\euler_n)\geq Q_2(\OR_n\circ\IP_n)\geq \Omega(n^{3/2})$.

This quantum lower bound is in fact tight: we first decide if $G$ is connected
using $O(n\log n)$ bits of communication (Section~\ref{secconnect}),
and if so then we use the Aaronson-Ambainis protocol to search for 
a vertex of odd degree (deciding whether a given vertex
has odd degree can be done deterministically with $O(n)$ bits of communication).
Thus we have:

\begin{theorem}
$R_2(\euler_n)=\Theta(n^2)$
and
$Q_2(\euler_n)=\Theta(n^{3/2})$.
\end{theorem}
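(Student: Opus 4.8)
The plan is to prove the four bounds $R_2(\euler_n)=\Omega(n^2)$, $R_2(\euler_n)=O(n^2)$, $Q_2(\euler_n)=\Omega(n^{3/2})$ and $Q_2(\euler_n)=O(n^{3/2})$ separately, in each case reusing the reduction from $\OR_n\circ\IP_n$ to $\euler_{3n+4}$ already established above (Claim~\ref{eulerclaim} together with its transfer to the communication setting). The two classical bounds are the quickest: the upper bound is immediate since Alice can send her entire list of edges, so $R_2(\euler_n)\le D(\euler_n)\le O(n^2)$; for the lower bound I would compose the reduction above with a further reduction embedding the unique-intersection version of $\DISJ_{n^2}$ into $\OR_n\circ\IP_n$. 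Concretely, split the $n^2$ input bits into $n$ blocks of $n$ bits each; under the promise that the two sets intersect in at most one element, at most one block can contain that element, so the $i$-th block has $\IP_n=1$ precisely when the intersection lies in it, whence $\OR_n\circ\IP_n$ equals the negation of $\DISJ_{n^2}$. Razborov's $\Omega(n^2)$ lower bound for this promise version of disjointness then yields $R_2(\euler_{3n+4})=\Omega(n^2)$, and rescaling gives $R_2(\euler_n)=\Omega(n^2)$.

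For the quantum lower bound I would feed the same reduction into the composition theorem for quantum communication of Lee and Zhang: when $g$ is strongly balanced (every row and column of the sign matrix $(-1)^{g}$ sums to zero), the quantum communication complexity of $f\circ g^n$ is at least the approximate polynomial degree of $f$ times the discrepancy bound of $g$. Here $f=\OR_n$, with approximate degree $\Theta(\sqrt n)$, and $g=\IP_n$, which contains a $2^{n-1}\times 2^{n-1}$ strongly balanced submatrix with discrepancy bound $\Omega(n)$ (one may take the rows indexed by the $x$ with $x_1=1$ and the columns indexed by the $y$ with $e\cdot y=1$ for a fixed nonzero $e$ having $e_1=0$, which makes all row and column sums vanish). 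Restricting attention to this submatrix, the composition theorem gives $Q_2(\OR_n\circ\IP_n)=\Omega(n^{3/2})$, and with the reduction this yields $Q_2(\euler_n)=\Omega(n^{3/2})$.

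The quantum upper bound uses Euler's characterization: $G$ is Eulerian iff it is connected and every vertex has even degree. Alice and Bob first test connectivity using the $O(n\log n)$-bit protocol of Section~\ref{secconnect}; if $G$ is connected, they then search for a vertex of odd degree. Deciding whether a \emph{fixed} vertex $v$ has odd degree costs only $O(n)$ bits deterministically --- Alice sends the $v$-th row $x^v$ of her incidence matrix and Bob combines it with his own row using $\deg(v)=|x^v\vee y^v|$ --- and this subroutine can be run reversibly/coherently using $O(n)$ qubits of communication. Plugging it as the oracle into a distributed Grover search over the $n$ vertices uses $O(\sqrt n)$ oracle calls, hence $O(n^{3/2})$ qubits in all; one final $O(n)$-bit call checks whether the returned vertex really has odd degree. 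The protocol answers ``Eulerian'' iff $G$ is connected and no odd-degree vertex was found. The only source of error is that Grover may fail to return an existing odd-degree vertex, which happens with probability at most $1/3$, as permitted by the definition of $Q_2$, so no extra logarithmic amplification is needed and the total cost is $O(n^{3/2})$. Assembling the four bounds proves the theorem.

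The step I expect to need the most care is the quantum upper bound: turning ``search for an odd-degree vertex'' into a genuine protocol requires simulating the $O(n)$-bit degree-parity check reversibly so that each Grover iteration is well-defined, arranging that the residual error is one-sided in the direction that keeps it within $1/3$ (so that no $\log$ factor creeps in), and correctly handling the overlap of $E_A$ and $E_B$ --- which is exactly what makes the parity of $\deg(v)$ depend on an inner product of incidence vectors rather than on a purely local quantity. On the lower-bound side the analogous points to verify are that the chosen submatrix of $\IP_n$ is genuinely strongly balanced while still having discrepancy bound $\Omega(n)$, and that the blocking reduction really outputs the unique-intersection promise instances that Razborov's theorem requires.
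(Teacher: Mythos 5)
Your proposal is correct and follows essentially the same route as the paper: the classical lower bound via blocking unique-intersection $\DISJ_{n^2}$ into $\OR_n\circ\IP_n$ and invoking Razborov, the quantum lower bound via the Lee--Zhang composition theorem with a strongly balanced submatrix of $\IP_n$, and the quantum upper bound via the connectivity protocol followed by a distributed Grover/Aaronson--Ambainis search for an odd-degree vertex with an $O(n)$-bit exact degree-parity subroutine. The only additions beyond the paper's argument are welcome details (the explicit strongly balanced submatrix and the block decomposition), both of which check out.
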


\section{Other problems}

In this section we look at the quantum and classical communication complexity of a number of other graph properties.
Most results here are easy observations based on previous work, but worth making nonetheless.

\subsection{Triangle-finding}

Suppose we want to decide whether $G$ contains a triangle.
Papadimitriou and Sipser~\cite[pp.~266--7]{papadimitriou&sipser:cc}\footnote{Word of warning: Papadimitriou and Sipser \cite{papadimitriou&sipser:cc}
use the term ``inner product'' for what is now commonly called the ``intersection problem,'' i.e., the negation of disjointness.}
gave a reduction from $\DISJ_m$ to $\trianglefind_n$ for $m=\Omega(n^2)$,
which implies $R_2(\trianglefind_n)=\Theta(n^2)$.
Since we know that $Q_2(\DISJ_m)=\Theta(\sqrt{m})$, it also follows that $Q_2(\trianglefind_n)=\Omega(n)$.

This quantum lower bound is actually tight, which can be seen as follows.
First Alice checks if there already is a triangle within the edges $E_A$, and Bob does the same for $E_B$.
If not, then Alice defines the set of edges
$S_A=\{(a,b)\mid \exists~c~s.t.~(a,c),(b,c)\in E_A\}$ which would complete a
triangle for her, and uses the Aaronson-Ambainis protocol to try to find one among Bob's edges
(i.e., she searches for an edge in $S_A\cap E_B$).
Since $|S_A|\leq{n\choose 2}$, this process will find a triangle if Alice already
holds two of its edges, using $O(n)$ qubits of communication.
Bob does the same from his perspective.
If $G$ contains a triangle, then either Alice or Bob has at least two edges
of this triangle.  Hence this protocol will find a triangle with high probability
if one exists, using $O(n)$ qubits of communication.
Thus we have:

\begin{theorem}
$R_2(\trianglefind_n)=\Theta(n^2)$
and
$Q_2(\trianglefind_n)=\Theta(n)$.
\end{theorem}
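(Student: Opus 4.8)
The theorem packages four bounds, and my plan is to dispatch three of them by citing known results and to spend most of the effort on the quantum upper bound $Q_2(\trianglefind_n)=O(n)$. The randomized upper bound $R_2(\trianglefind_n)=O(n^2)$ (indeed even the deterministic one) is trivial, since the relevant part of each player's input is a subset of the $\binom{n}{2}$ possible edges, so Alice can simply send her edge set to Bob. For the two lower bounds I would invoke the reduction of Papadimitriou and Sipser~\cite{papadimitriou&sipser:cc} from $\DISJ_m$ to $\trianglefind_n$ with $m=\Omega(n^2)$: combined with $R_2(\DISJ_m)=\Theta(m)$ this yields $R_2(\trianglefind_n)=\Omega(n^2)$, and combined with $Q_2(\DISJ_m)=\Theta(\sqrt m)=\Theta(n)$ it yields $Q_2(\trianglefind_n)=\Omega(n)$. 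Here one should double-check that the Papadimitriou--Sipser construction uses only the $\binom{n}{2}$ ordinary edges (no loops or parallel edges), so that the instance size is genuinely $\Theta(n^2)$ rather than larger.

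The substance is the $O(n)$-qubit protocol for triangle detection, which I would set up in two phases. In phase one, Alice tests locally whether $E_A$ contains a triangle and Bob tests whether $E_B$ does; each sends one bit, and if either reports a triangle we output $1$. In phase two we use the structural observation that the three edges of a triangle of $G$ are distributed, possibly with repetition, between the two players, so by pigeonhole at least one player holds two of them -- and by phase one that player does not hold all three. Hence, whenever $G$ has a triangle that survived phase one, some player, say Alice, holds two edges $\{a,c\},\{b,c\}$ of it while the third edge $\{a,b\}$ belongs to $E_B$. Accordingly Alice forms the ``closed cherry'' set $S_A=\{\{a,b\} : \exists c\text{ with }\{a,c\},\{b,c\}\in E_A\}$, determined by her input alone, and runs the Aaronson--Ambainis distributed search quoted in the Preliminaries over the universe of $\binom{n}{2}$ potential edges to look for an element of $S_A\cap E_B$; membership in $S_A$ depends only on Alice's input and membership in $E_B$ only on Bob's, which is exactly the form the search protocol requires. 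Since $\binom{n}{2}=O(n^2)$, this search costs $O(\sqrt{n^2})=O(n)$ qubits using the log-saving Aaronson--Ambainis variant (a plain distributed Grover would cost $O(n\log n)$). Bob runs the symmetric search from his perspective, and we output $1$ iff phase one found a triangle or either search succeeded.

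The main obstacle -- really the only non-routine point -- is to argue correctness of phase two cleanly, in particular the case analysis showing that if $G$ has a triangle not caught in phase one, then one of the two searches is genuinely searching for an element that lies in the other player's set; this needs a sentence handling the possibility that $E_A$ and $E_B$ overlap and the degenerate sub-cases where a triangle edge appears in both sets. Once that is settled, soundness is immediate (both phases only ever report a triangle that actually exists), and completeness follows from the bounded-error guarantee of the Aaronson--Ambainis protocol, with the expected-versus-worst-case communication and the constant error both handled by the standard Markov-inequality-plus-repetition argument. Combining all four bounds gives $R_2(\trianglefind_n)=\Theta(n^2)$ and $Q_2(\trianglefind_n)=\Theta(n)$.
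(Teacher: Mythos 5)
Your proposal is correct and follows essentially the same route as the paper: the lower bounds via the Papadimitriou--Sipser reduction from $\DISJ_m$ with $m=\Omega(n^2)$, and the $O(n)$ quantum upper bound via local triangle checks followed by an Aaronson--Ambainis search for an element of $S_A\cap E_B$ (and symmetrically for Bob) over the universe of ${n\choose 2}$ potential edges. The extra care you take with overlapping edge sets and degenerate cases is fine but not needed beyond what the paper states.
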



\subsection{Bipartiteness}

Deterministic protocols can decide whether a given graph $G$ is bipartite using $O(n\log n)$ bits of communication, as follows.
Being bipartite is equivalent to being 2-colorable. Alice starts with some vertex $v_1$, colors it red,
and colors all of its neighbors (within $E_A$) blue.  Then she communicates all newly-colored vertices and their colors to Bob.
Bob continues coloring the neighbors of $v_1$ blue, and once he's done he communicates the newly-colored vertices and their colors to Alice.
If all vertices have been colored then Alice stops, otherwise she chooses an uncolored neighbor $v_2$ of a blue vertex, colors $v_2$ red,
and continues as above coloring $v_2$'s neighbors blue.
A connected graph is 2-colorable iff this process terminates without encountering a vertex colored both red and blue
(if the graph is not connected then Alice and Bob can treat each connected component separately).
Since each vertex will be communicated at most once, the whole process takes $O(n\log n)$ bits.

Babai et al.~\cite[Section~9]{bfs:classes} state a reduction from $\IP_n$ to bipartiteness,
which implies a near-matching quantum lower bound $Q_2(\bipart_n)=\Omega(n)$.

\begin{theorem}
$\Omega(n)\leq  Q_2(\bipart_n)\leq D(\bipart_n) \leq O(n\log n)$.
\end{theorem}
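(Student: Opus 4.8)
The plan is straightforward: the upper bound $D(\bipart_n)\le O(n\log n)$ is already supplied by the $2$-colouring protocol described immediately above the statement, and $Q_2(f)\le D(f)$ holds for every function, so the only real work is the lower bound $Q_2(\bipart_n)=\Omega(n)$. Since we know $Q_2(\IP_n)=\Omega(n)$, it suffices to reduce $\IP_n$ to $\bipart$. This is the reduction stated by Babai et al.~\cite[Section~9]{bfs:classes}; I would carry it out in exactly the ``query reduction, then transfer via gadgets'' style already used for connectedness and for matching in this paper.

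For the reduction, I would assume $n$ is odd (otherwise pad by one coordinate) and start from the graph $G_z$ of Claim~\ref{claim:queryred} with $z=x\wedge y$. Recall $G_z$ is a single $2n$-cycle when $|z|$ is odd and a disjoint union of two $n$-cycles when $|z|$ is even; as $n$ is odd, the latter consists of odd cycles, so $G_z$ is bipartite exactly when $|z|=\IP_n(x,y)$ is odd. To split the edges between Alice and Bob while preserving this even/odd-cycle dichotomy, I would replace each conditional edge of $G_z$ by a conditional \emph{odd-length} path, so that a cycle of length $L$ becomes a cycle of length $3L$ of the same parity, together with harmless dangling trees. A diagonal edge, present iff $x_i\wedge y_i=1$, becomes a length-$3$ path whose first edge is Alice's and present iff $x_i=1$, whose middle edge is always present, and whose last edge is Bob's and present iff $y_i=1$; if $x_iy_i=0$ this degenerates to a tree hanging off the rail and contributes no cycle. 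A horizontal edge, present iff $\bar x_i\vee\bar y_i=1$, is the only subtle case: a single conditional path realizes only an AND of the two bits, so I would use two parallel length-$3$ paths between the same endpoints, one belonging entirely to Alice and present iff $x_i=0$, the other entirely to Bob and present iff $y_i=0$. Then one $0$ among $x_i,y_i$ gives exactly one odd path (the desired behaviour), two $0$'s give two parallel odd paths forming an even $6$-cycle (introducing no odd cycle), and two $1$'s give no path at all (matching the absence of the horizontal edge).

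To finish, I would note that subdividing each edge of a length-$L$ cycle into a length-$3$ path yields a length-$3L$ cycle of the same parity, and that all the extra structure (dangling trees, and the occasional even $6$-cycle) is bipartite and creates no odd cycle; hence the resulting distributed graph $G=([N],E_A\cup E_B)$ on $N=O(n)$ vertices is bipartite iff its main cycle(s) are even iff $|z|$ is odd iff $\IP_n(x,y)=1$. A protocol for $\bipart_N$ thus computes $\IP_n$, giving $Q_2(\bipart_n)=\Omega(n)$. (One can even recycle the graph already built in the proof of Theorem~\ref{theorem:matchlower}: there $\IP=1$ produces one even cycle plus trees and $\IP=0$ produces two odd cycles, which is precisely what is needed here.)

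The only step I expect to require care is the OR gadget for the horizontal edges — verifying that the two-parallel-paths construction, and each of its degenerate configurations, never produces an odd cycle. The parity bookkeeping for the main cycle(s), the vertex count, the inequality $Q_2\le D$, and the upper bound are all routine.
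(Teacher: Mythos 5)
Your proposal is correct, but it does noticeably more than the paper does: for the lower bound the paper simply cites Babai, Frankl and Simon~\cite[Section~9]{bfs:classes}, who \emph{state} a reduction from $\IP_n$ to bipartiteness, whereas you actually construct such a reduction in the paper's own ``query reduction via Claim~\ref{claim:queryred}, then transfer by gadgets'' style. Your construction checks out: with $n$ odd, $G_{x\wedge y}$ is a single even cycle when $\IP_n(x,y)=1$ and two odd $n$-cycles when $\IP_n(x,y)=0$; replacing each conditional edge by a conditional odd-length path changes a length-$L$ cycle into one of length $L+2k$, preserving parity; and the delicate case --- realizing the OR condition $\bar x_i\vee\bar y_i$ for horizontal edges via two parallel odd paths --- is handled correctly, since a simple cycle either traverses exactly one of the two parallel paths (contributing odd length, as a single edge would) or is the even $6$-cycle formed by both, and all dangling remnants are trees. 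So every odd cycle of the distributed graph comes from an odd cycle of $G_z$ and vice versa, which is exactly what the reduction needs. The upper bound and the inequality $Q_2\le D$ are handled as in the paper. The one claim I would not lean on without checking the figure is your parenthetical suggestion to recycle the graph from the proof of Theorem~\ref{theorem:matchlower} verbatim; it is plausible (each edge there is also replaced by an odd-length conditional path), but your standalone construction is self-contained and does not need it.
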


\subsection{Planarity}

Duris and Pudl\'{a}k~\cite{pudlak&duris:planarity} exhibited a deterministic protocol with $O(n\log n)$ bits
of communication for deciding whether a graph $G$ is \emph{planar} (i.e., whether it can be drawn in the plane without intersecting edges).
Babai et al.~\cite[Section~9]{bfs:classes} state a reduction from $\IP_n$ to planarity,
which implies a near-matching quantum lower bound $Q_2(\planar_n)=\Omega(n)$.

\section{Conclusion and open problems}

In this paper we studied the communication complexity (quantum and classical)
of a number of natural graph properties, obtaining nearly tight bounds for many of them.
We mention a few open problems:
\begin{itemize}
\item For \connect, can we improve the quantum upper bound from the trivial
$O(n\log n)$ to $O(n)$, matching the lower bound?
One option would be to run a distributed version of the $O(n)$-query
quantum algorithm of D\"urr et al.~\cite{dhhm:graphproblemsj},
but this involves a classical preprocessing phase that
seems to require $O(n\log n)$ communication.
Another option would be to run some kind of quantum random walk on the graph, starting from a random vertex,
and test whether it converges to a superposition of all vertices.
\item For \matching, can we show that the deterministic $O(n^{3/2}\log n)$-bit
protocol is essentially optimal, for instance by means of a $2^{\Omega(n^{3/2})}$
lower bound on the rank of the associated communication matrix?
Can we improve this upper bound using randomization and/or quantum communication,
possibly matching the $Q_2(\matching_n)=\Omega(n)$ lower bound?
\item Can we extend the $D(\matching_n) \leq O(n^{3/2}\log n)$ bound to general graphs?
\end{itemize}

\section*{Acknowledgements}
We thank Rahul Jain for several insightful discussions.

\bibliographystyle{alpha}
\bibliography{../qc}

\end{document}